\documentclass[12pt]{article}
\pdfoutput=1

\usepackage{amsmath,amssymb,calc}
\usepackage{amsthm}
\usepackage{thmtools}
\usepackage{mathrsfs}
\usepackage{bm}
\usepackage{bbm}
\usepackage{mathtools}
\usepackage{enumerate}
\usepackage[authoryear]{natbib}
\usepackage{caption,subcaption}
\usepackage{pgfplots}
\usepackage{chngcntr}
\usepackage{appendix}
\usepackage[colorlinks,citecolor=blue,urlcolor=blue]{hyperref}
\usepackage{cleveref}
\usepackage{crossreftools}
\usepackage{titlesec}
\usepackage{hyphenat}
\usepackage{booktabs}

\usepackage{authblk}
\author[1]{Tetsuya Kaji}
\author[2]{Jianfei Cao}
\affil[1]{University of Chicago}
\affil[2]{Northeastern University}

\title{Assessing Heterogeneity of Treatment Effects%
\thanks{This work is supported by the Liew Family Junior Faculty Fellowship and the Richard N.\ Rosett Faculty Fellowship at the University of Chicago Booth School of Business and the support for visiting faculty at the Department of Economics at the Massachusetts Institute of Technology.
We thank Yijing Zhang for excellent research assistance. We thank Rachael Meager, Toru Kitagawa, Kosuke Imai, Alberto Abadie, Whitney Newey, Ismael Mourifi\'e, Eric Auerbach, Ivan Canay, Federico Bugni, Filip Obradovic, Muyang Ren, Peter Hull, Josh Angrist, Marinho Bertanha, Simon Lee, and seminar participants at various places for helpful comments.}}
\date{\today}

\allowdisplaybreaks[3]

\theoremstyle{plain}
\newtheorem{thm}{Theorem}
\newtheorem*{thm*}{Theorem}

\theoremstyle{definition}

\theoremstyle{remark}

\renewcommand\thmcontinues[1]{continued}

\setcounter{secnumdepth}{4}
\usepackage{sectsty}
\subsubsectionfont{\normalfont\itshape}

\newcommand{\exasymbol}{$\square$}

\declaretheoremstyle[
spaceabove=6pt, spacebelow=6pt,
headfont=\normalfont\bfseries,
notefont=\mdseries, notebraces={(}{)},
bodyfont=\normalfont,
postheadspace=1em,
qed=\exasymbol
]{exa}
\declaretheorem[style=exa,name=Example]{exa}

\declaretheorem[style=exa,name=Example,unnumbered]{exa*}

\counterwithin{subsection}{section}
\counterwithin{subsubsection}{subsection}

\crefname{thm}{Theorem}{Theorems}
\crefname{prop}{Proposition}{Propositions}
\crefname{lem}{Lemma}{Lemmas}
\crefname{exa}{Example}{Examples}
\crefname{section}{Section}{Sections}
\crefname{subsection}{Section}{Sections}
\crefname{subsubsection}{Section}{Sections}
\crefname{appendix}{Appendix}{Appendices}

\crefrangelabelformat{exa}{#3#1#4--#5\crefstripprefix{#1}{#2}#6}

\makeatletter
\renewcommand*\env@matrix[1][*\c@MaxMatrixCols c]{%
  \hskip -\arraycolsep
  \let\@ifnextchar\new@ifnextchar
  \array{#1}}
\makeatother

\usepackage[margin=1.25in]{geometry}
\usepackage[onehalfspacing,nodisplayskipstretch]{setspace}

\pagestyle{myheadings}
\pagestyle{plain}

\makeatletter
\newcommand{\subalign}[1]{%
  \vcenter{%
    \Let@ \restore@math@cr \default@tag
    \baselineskip\fontdimen10 \scriptfont\tw@
    \advance\baselineskip\fontdimen12 \scriptfont\tw@
    \lineskip\thr@@\fontdimen8 \scriptfont\thr@@
    \lineskiplimit\lineskip
    \ialign{\hfil$\m@th\scriptstyle##$&$\m@th\scriptstyle{}##$\hfil\crcr
      #1\crcr
    }%
  }%
}
\makeatother

\usepackage{accents}

\begin{document}

\maketitle

\begin{abstract}
Heterogeneous treatment effects are of major interest in economics. For example, a poverty reduction measure would be best evaluated by its effects on those who would be poor in the absence of the treatment, or by the share among the poor who would increase their earnings because of the treatment. While these quantities are not identified, we derive nonparametrically sharp bounds using only the marginal distributions of the control and treated outcomes. Applications to microfinance and welfare reform demonstrate their utility even when the average treatment effects are not significant and when economic theory makes opposite predictions between heterogeneous individuals.

\textsc{JEL Codes:} C01, I38, J21.
\end{abstract}

\section{Introduction}

Many questions in economics involve knowledge about heterogeneous treatment effects across different outcome levels.
Consider, for example, the evaluation of microfinance as a poverty reduction measure.
Suppose that the outcome of interest is wealth.
If the average treatment effect (ATE) is positive, the access to microfinance increases individuals' wealth on average.
However, if this is brought about by rich people being richer and poor people being poorer, the policymaker may want to think twice before rushing to implementation.
In another example, consider evaluating a welfare reform on low\hyp{}income workers.
The effect of the reform can be complex in that the anticipated income effect and price effect might point at different directions, even under simple and stylized economic theory.
An important piece of information is how many workers choose to work more under the new welfare program.

This paper develops new methods to help answer these questions.
To fix ideas, let $Y_{i1}$ be subject $i$'s treated outcome and $Y_{i0}$ subject $i$'s control outcome.
Two quantities are of interest here.
The first is the {\em subgroup treatment effect (STE)},
\[
	\mathbb{E}[Y_{i1}-Y_{i0}\mid Y_{i0}<c],
\]
where $c$ is some number.
In the first example, $Y$ corresponds to the wealth of individual $i$, so the STE represents the ATE of the subgroup defined by low wealth.
The second quantity is the {\em subgroup proportion of winners (SPW)},
\[
	P(Y_{i1}>Y_{i0}\mid Y_{i0}<c).
\]
In the second example, $Y$ corresponds to the earnings from work for individual $i$, so the SPW gives the share of low\hyp{}earning individuals who earn more under the new welfare program.%
\footnote{If paid by hours, the hours worked are proportional to earnings. Then, we might regard higher earnings as more hours worked.}

Both quantities, however, require knowledge of individual treatment effects (ITEs) because of the conditioning on $Y_{i0}$.
Since ITEs are not identified, neither quantity is point\hyp{}identified.
However, the marginal distributions of $Y_{i0}$ and $Y_{i1}$ put restrictions on the maximum ranges in which they lie.
Consequently, we seek bounds that are as tight as possible with no additional assumptions.

Our bounds provide useful complementary information to the widely\hyp{}used quantile treatment effect (QTE).
The QTE is often used to display the degree of heterogeneity immanent in the data but overlooked by the ATE \citep[e.g.,][]{bgh2006}.
However, the QTE does not come with a handy interpretation as the ITE unless one is willing to assume rank invariance; hence it calls for a delicate interpretation to say anything more than the mere existence of heterogeneity.
Our bounds are available almost whenever QTEs are, and they characterize the furthest extent to which the data can speak about the heterogeneity in the forms of STEs and SPWs.

The bounds depend only on the marginal distributions of the treated outcome and the control outcome, so they are applicable widely beyond the randomized controlled trial (RCT) framework.%
\footnote{We do require, however, that the treatment be binary.}
For example, there are existing estimators of the potential outcome distributions in the binary IV models with monotone compliance, selection\hyp{}on\hyp{}observables models, panel data, and regression discontinuity designs.
These estimators were mainly proposed as intermediate quantities to obtain the QTE, but instead of (or in addition to) calculating the QTE, we can feed them into our bounds to derive additional, interpretable insights on heterogeneity.
The Stata program to compute the bounds from the RCT data (with monotone compliance) is available at an \href{https://github.com/jcao0/subgroup-treatment-effects}{author's website}.

This paper adds to the enormous literature on partial identification of heterogeneity measures in program evaluation.
The literature is too vast for us to give a comprehensive review, but
some papers close to ours include \citet{hsc1997}, \citet{fp2010}, and \citet{t2012}.
\citet{hsc1997} discussed bounds on the proportion of winners and the distribution of treatment effects and investigated how the assumption of rational choice by the participants helps reduce uncertainty.
\citet{fp2010} gave bounds on the distribution of treatment effects in terms of second\hyp{}order stochastic dominance.
\citet{t2012} derived bounds on positive and absolute treatment effects and the proportion of winners on the entire population.
In contrast, we consider subgroups that are defined by the ranges of control outcome, giving a finer picture on the distribution of the treatment effects.
\citet{jls2023} considered robust estimation of partially identified causal effects exploiting information of covariates.
When we wish to incorporate covariates into estimation, their method provides a convenient way to produce reliable estimators of the bounds.

Our STE bounds are also related to the Lee bounds on the treatment effect under sample selection \citep{hm2000,l2009a}.
Suppose $Y_i$ is wage, $D_i$ is eligibility to a job training program, and $S_i$ the indicator of $i$'s employment status.
Since the wage is only observed when employed, the outcome is missing when $S_i=0$.
Denote by $S_{id}$ the potential indicator of employment when $D_i=d$.
They concern bounding the effect of the job training program on wages for individuals who would be employed under both treatment statuses, $\mathbb{E}[Y_{i1}-Y_{i0}\mid S_{i0}=1,S_{i1}=1]$.
Under the standard assumptions of the literature, this boils down to bounding $\mathbb{E}[Y_{i1}\mid S_{i0}=1,S_{i1}=0]$ when the distribution of $Y_{i1}$ given $S_{i1}=1$ and $P(S_{i0}=1\mid S_{i1}=1)$ are identified.
We have an analogous situation: our objective is to bound $\mathbb{E}[Y_{i1}-Y_{i0}\mid Y_{i0}<c]$ when the marginal distribution of $Y_{i1}$ and $P(Y_{i0}<c)$ are identified.

The rest of the paper is organized as follows. 
\cref{sec:motivating} motivates our bounds with several examples in development economics, health economics, and labor economics.
\cref{sec:results} lays out the formal statements of our results.
\cref{sec:micro} presents an application of the STE to microfinance, drawn from \citet{tdj2015}.
\cref{sec:labor} gives an application of the SPW to the evaluation of a welfare reform, borrowing from \citet{bgh2006}.
\cref{sec:concl} concludes.
All proofs are contained in \cref{sec:theory}.

\section{Motivating Examples} \label{sec:motivating}

We discuss a few stylized examples to motivate each of the STE and SPW.
In all examples, and throughout the paper, we maintain the potential outcomes notation: $Y_{i1}$ for the treated and $Y_{i0}$ for the control outcomes.

\subsection{Subgroup Treatment Effects}

\begin{exa}[Microfinance for the poor]
Microfinance allows poor households to borrow a small amount of money, with which they can invest in crops or livestock to escape from a poverty trap.
The outcome of interest is therefore a measure of wealth or income, such as the value of livestock owned or the net revenues from crops.
Let $Y_i$ be the value of livestock owned by $i$.
On top of the ATE, $\mathbb{E}[Y_{i1}-Y_{i0}]$, we are interested specifically in the ATE for those who would have had low values of livestock in possession,
\[
	\mathbb{E}[Y_{i1}-Y_{i0}\mid Y_{i0}<Y_0^b],
\]
where $Y_0^b$ is the $b$th quantile of $Y_{i0}$.
It is of interest to look at this value at various values of $b\in(0,1)$, and if it is positive for most low values of $b$, we might conclude that microfinance helps the poor.

In \cref{sec:micro}, we revisit this example using the RCT data from \citet{tdj2015}.
They gave estimates of the ATE%
\footnote{They considered access to microfinance as the treatment. If one sees borrowing from microfinance as the treatment, the ATE estimates can be understood as the intent\hyp{}to\hyp{}treat (ITT) estimates.}
and stated
\begin{quote}
Despite the large increase in borrowing, we find that for a large majority of socioeconomic outcomes the null of no impact cannot be rejected, although in several cases the point estimates are substantially large but imprecisely estimated.
\end{quote}
When we apply our bounds for the outcome of livestock value, we find that the STE for the poor is significantly positive, and the impreciseness of the estimates is most likely due to the large variation of the livestock values of the upper tail.
\end{exa}

\begin{exa}[Healthcare for the unhealthy]
Many unhealthy individuals are not covered by health insurance, and it is of interest to know whether they would become healthier if they were covered.
Therefore, the treatment of interest is the access to healthcare, and the outcome of interest is a measure of health such as the blood pressure.
Let $Y_i$ be the blood pressure and $[c_1,c_2]$ be the ideal range thereof.
One quantity of interest is the effect of access to healthcare on the blood pressure for those with otherwise low blood pressure,
\[
	\mathbb{E}[Y_{i1}-Y_{i0}\mid Y_{i0}<c_1].
\]
If this is positive, availability of healthcare improves the health of those with low blood pressure.
We may also be interested in the effect on the blood pressure for those with otherwise high blood pressure,
\[
	\mathbb{E}[Y_{i1}-Y_{i0}\mid Y_{i0}>c_2].
\]
If this is negative, healthcare is beneficial for those as well.
As the direction of benefits changes depending on where the control outcome is, such health benefits are difficult to capture by ATEs.
\end{exa}

\subsection{Subgroup Proportions of Winners}

\begin{exa}[Evaluation of welfare reform]
When replacing an old welfare program with a new one, Connecticut conducted a randomized experiment to evaluate its effect.
In essence, this reform replaced a mild indefinite support for women with a more generous but time\hyp{}limited support.
\citet{bgh2006} observed that theory predicts heterogeneous effects regarding the sign and magnitude of the response of labor supply.

Individual $i$'s income can be decomposed into earnings and government transfers.
Let $Y_i$ be $i$'s earnings.
As we will see in \cref{sec:labor}, stylized theory predicts positive effects on earnings for low\hyp{}earnings workers before the time limit, so if both leisure and consumption are normal goods for a good amount of workers, then we expect
\[
	P(Y_{i1}>Y_{i0}\mid Y_{i0}<c)
\]
to be positive for a low value of $c$.
After presenting QTEs, \citet{bgh2006} stated
\begin{quote}
Without further assumptions, the possibility of rank reversals prevents us from being more specific about who the winners and losers are.
\end{quote}
In \cref{sec:labor}, we apply our bounds on this example and find that the lower bound is significantly positive.
\end{exa}

\begin{exa}[Do-no-harm principle]
The principle of ``do no harm'' is considered to be a primary ethical standard of the medical profession.
Let $Y_i$ be a biomarker that measures the level of health. Assume for simplicity that the larger the better in the observed range.
Then, the treatment is beneficial to $i$ when $Y_{i1}>Y_{i0}$ (a ``winner'' from the treatment), while it is harmful to $i$ when $Y_{i1}<Y_{i0}$ (a ``loser'' from the treatment).
If there are losers among those with low levels of $Y_{i0}$, that is, $Y_{i0}<c$, that means the treatment exacerbates the conditions of those who are already bad.
Thus, it is of interest to know the magnitude of
\[
	P(Y_{i1}<Y_{i0}\mid Y_{i0}<c)
\]
in deciding a suitable treatment.
\end{exa}

\begin{exa}[Persuasion effect]
When the provision of information is the treatment, of interest is whether the information affects the decision of the information receiver.
This is called the {\em persuasion effect}.
\citet{dk2007} considered the persuasion effect of Fox News on voting for the Republican candidate in Presidential elections.
To be precise, let $Y_i$ be a binary indicator of whether voter $i$ votes for the Republican candidate.
The treatment is the availability of Fox News in $i$'s region.
Persuasion by Fox News is defined as $Y_{i0}=0$ and $Y_{i1}=1$, that is, in the absence of Fox News, $i$ votes for the Democrat candidate, but with Fox News, for the Republican.
This corresponds to the ``winner'' in our paper.
The quantity of interest is the {\em persuasion rate},
\[
	P(Y_{i1}>Y_{i0}\mid Y_{i0}=0).
\]
\citet{jl2023} showed that the persuasion rate is partially identified in empirically relevant setups and derived sharp bounds thereof.
In fact, our bounds (\cref{thm:spw}) reduce to equation (3) of their paper when $Y$ is binary.

Now, the results of this paper can be considered a natural extension of the persuasion effect to continuous decisions, e.g., whether the medical recommendation to sleep longer induces patients to sleep longer and how such effects differ across different levels of initial hours of sleep.
\end{exa}

\section{Main Results} \label{sec:results}

\subsection{Setting}

We do not directly observe the potential outcomes $Y_{i0}$ and $Y_{i1}$.
Instead, we observe the treatment indicator $D_i$ and the corresponding outcome $Y_i=D_i Y_{i1}+(1-D_i)Y_{i0}$.
We let $F_0$ and $F_1$ denote the marginal cumulative distribution functions (cdfs) of $Y_{i0}$ and $Y_{i1}$ for the population of interest.

Depending on the application, $F_0$ and $F_1$ for the intended population may be identified through different mechanisms.
Here, we list some examples.
Thereafter, we take the identification of $F_0$ and $F_1$ as given.

\begin{itemize}

	\item {\em Case 1 (RCT):} The treatment indicator $D_i$ is exogenous. This arises when the treatment is randomized or when the assignment is randomized but the focus is on the ITT.
	In this case, $F_0$ and $F_1$ are trivially identified and correspond to the cdfs of outcomes for the entire population in the experiment.

	\item {\em Case 2 (Imperfect compliance):} We observe a binary instrument $Z_i$ that satisfies monotonicity. This arises when assignment to the treatment is randomized.
	If one\hyp{}sided compliance holds, $F_0$ and $F_1$ for the compliers are identified as a normalized difference of observable cdfs \citep[eq.~(5--6)]{a2002}.%
	\footnote{The plug\hyp{}in estimator is not guaranteed to be monotonic, but we can rearrange it to be \citep{cfg2009}.}
	This has been extended to cases where $Z_i$ is valid conditional on covariates \citep{aai2003,fm2013,p2020}.

	\item {\em Case 3 (Difference\hyp{}in\hyp{}differences):} We observe cross\hyp{}sectional pre\hyp{}treatment outcomes.%
	\footnote{By ``cross\hyp{}sectional'' we mean that individuals need not be tracked over time.}
	If the ``change\hyp{}in\hyp{}changes'' assumption holds, $F_0$ and $F_1$ for the treated are identified as the composition of observable cdfs and quantiles \citep[Theorem 3.1]{ai2006}.
	This has been generalized to synthetic control by \citet{g2023}.
	He proposed to construct $F_0$ for the treated as a convex combination of those for the control, where the weights are determined as the minimizer of the 2\hyp{}Wasserstein distance in the pre\hyp{}treatment periods.

	\item {\em Case 4 (Selection on observables):} We observe covariates $W_i$ such that $D_i$ is independent of $(Y_{i0},Y_{i1})$ conditional on $W_i$.
	If the common support assumption holds, $F_0$ and $F_1$ for the whole sample and for the treated are identified through propensity score weighting \citep[Lemma 1]{f2007}.

	\item {\em Case 5 (Regression discontinuity design):} We observe the running variable $R_i$ such that the propensity score $P(D_i=1\mid R_i)$ jumps discontinuously at a cutoff $R_i=r_0$. Under the assumption that there are no defiers at the cutoff, $F_0$ and $F_1$ for the compliers at the cutoff are identified, either for sharp or fuzzy regression discontinuity designs \citep[Lemma 1]{ffm2012}.

\end{itemize}

In all cases, we may also observe exogenous covariates $X_i$ and wish to condition our analysis on the values of $X_i$.
Then, we will be concerned of the conditional cdfs of $Y_{i0}$ and $Y_{i1}$ conditional on $X_i$.
Unless $X_i$ is finitely supported, this calls for some modeling of the conditional distribution functions \citep[e.g.,][]{ch2006,aai2003}.
For simplicity, we hereafter suppress $X_i$ in our notation and understand implicitly that all results are allowed be conditioned on $X_i$.

From this point on, we take for granted that $F_0$ and $F_1$ for the population of interest are identified.
Other than that, we do not make any more assumptions; notably, we do not require $Y_{i0}$ or $Y_{i1}$ to be continuous or discrete.

\subsection{Definition of the Subgroup}

Let $Q_0$ and $Q_1$ be the quantile functions corresponding to $F_0$ and $F_1$, that is,
\[
	Q_j(u)=\inf\{x\in\mathbb{R}:F_j(x)\geq u\}
\]
for $0<u<1$ and $j=0,1$.
In plain words, $Q_j$ is the inverse of $F_j$ but assigns the smallest value when there are flat regions of $F_j$.

To define subgroups, we introduce the {\em rank} of individual $i$ in terms of the control outcome, to be denoted by $U_i\in[0,1]$.
This rank is defined so that it satisfies
\[
	Y_{i0}=Q_0(U_i)
\]
for each individual in the population.
If $Y_{i0}$ has a continuous distribution, there is a one\hyp{}to\hyp{}one relationship between $U_i$ and $Y_{i0}$ such that $U_i=F_0(Y_{i0})$.
If $Y_{i0}$ has a discrete mass, there can be many distinct values of $U_i$ that correspond to a mass value of $Y_{i0}$ and we have $U_i\leq F_0(Y_{i0})$ in general.
In this case, we may understand $U_i$ as representing the unobservable heterogeneity of otherwise observationally equivalent individuals in this mass.
It is without loss of generality to assume that $U_i$ is distributed uniformly over $[0,1]$.

Consequently, the subgroup
\(
	\{Y_{i0}<c\}
\)
can also be represented as
\(
	\{U_i<F_0(c)\}
\),
but the subgroup
\(
	\{U_i<b\}
\)
is not necessarily identical to
\(
	\{Y_{i0}<Q_0(b)\}
\).
This means that conditioning on the values of $U_i$ is {\em finer} than conditioning on the values of $Y_{i0}$, that is, every conditioning on $Y_{i0}$ can be represented by some conditioning on $U_i$, but the converse is not true.
For this generality, we focus on the subgroups of the form
\[
	\{a<U_i<b\}
\]
for various values of $0\leq a<b\leq 1$, rather than of the form $\{a<Y_{i0}<b\}$.

The introduction of this slight hassle will pay off when we plot the bounds against the rank.
Conditioning on $U_i$ produces smooth curves that traverse over continuously increasing subpopulations no matter what the distribution of $Y_{i0}$ is.%
\footnote{For example, the Lipschitz property in \cref{thm:winner} is thanks to this formulation.}

\subsection{Subgroup Treatment Effects} \label{sec:ste}

\subsubsection{Statement}

The following theorem gives the sharp lower and upper bounds on the STE for the subgroup defined by $\{a<U_i<b\}$.

\begin{thm}[Bounds on subgroup treatment effects] \label{thm:ste}
For every $0\leq a<b\leq 1$,
\begin{align}
	\mathbb{E}[Y_{i1}-Y_{i0}\mid a<U_i<b]&\geq\frac{1}{b-a}\int_a^b[Q_1(u-a)-Q_0(u)]du,\label{thm:ste:1}\\
	\mathbb{E}[Y_{i1}-Y_{i0}\mid a<U_i<b]&\leq\frac{1}{b-a}\int_a^b[Q_1(1+a-u)-Q_0(u)]du,\label{thm:ste:2}
\end{align}
provided that the conditional expectation exists.
For each of the bounds, there exists a joint distribution of $(Y_{i0},Y_{i1})$ that attains it.
\end{thm}

These bounds are nonparametrically sharp, meaning that there exists a joint distribution of $(Y_{i0},Y_{i1})$, having marginal distributions $F_0$ and $F_1$, that attains each bound.
To name a few intuitive cases, the lower bound (\ref{thm:ste:1}) for $a=0$ is attained when {\em rank invariance} holds, i.e., $Y_{i1}=Q_1(U_i)$; the upper bound (\ref{thm:ste:2}) for $a=0$ is attained when {\em complete rank reversal} holds, that is, $Y_{i1}=Q_1(1-U_i)$.

\begin{figure}
\centering
\includegraphics[page=1]{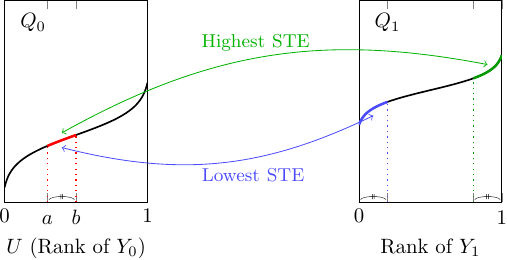}
\caption{Intuition of the STE bounds. The red region on the left indicates the $Y_0$ for the subgroup of interest $\{a<U<b\}$. The lowest $Y_1$ they can receive is the blue region on the right; the highest is the green region on the right.}
\label{fig:ste}
\end{figure}

The intuition of the bounds is visualized in \Cref{fig:ste}.
The left figure plots $Q_0$, and the red region describes our subpopulation of interest, $\{a<U_i<b\}$.
The right figure plots $Q_1$.
While we do not know where in $Q_1$ each individual in the red region corresponds to, we know that the lowest possible ATE these people can receive {\em collectively} is when they are matched with the blue region on $Q_1$.
This gives (\ref{thm:ste:1}).
Note that the exact correspondence between the points in red and those in blue can be left unspecified for this bound to be exact.
Also, the highest possible ATE they can receive is when they are matched with the green region on $Q_1$.
This gives (\ref{thm:ste:2}).
If we let $a=b$, they reduce to the trivial bounds,
\[
	Q_1(0)-Q_0(a)\leq\mathbb{E}[Y_{i1}-Y_{i0}\mid U_i=a]\leq Q_1(1)-Q_0(a).
\]

The above intuition suggests that the lower bound would be the most informative when $a=0$ and the upper bound when $b=1$.
In fact, in the application in \cref{sec:micro}, we fix $a=0$ (since we focus on the poor) and view the bounds as functions of $b$.

While the QTE cannot be interpreted as the ITE in the absence of rank invariance, (\ref{thm:ste:1}) states that the {\em integral} of the QTE can be interpreted as the lower bound for the STE with $a=0$ {\em without requiring} rank invariance.
As stated earlier, this lower bound is exact when rank invariance holds, but it is also expected to be close when rank invariance approximately holds (see also \Cref{fig:steexa:L} in the next section).
The idea that individuals do not move ranks too drastically by the treatment might appear reasonable in economic applications.

The natural estimator of the bounds is the plug\hyp{}in estimator, where the quantile functions are replaced by the estimated ones.
For example, in the ITT analysis in \cref{sec:micro}, we plug in the empirical quantile functions of $Y_i\mid D_i=0$ and of $Y_i\mid D_i=1$ into $Q_0$ and $Q_1$.

If the estimated bounds are asymptotically jointly normal for fixed $a$ and $b$, we may construct an asymptotically valid confidence interval for the STE using the methods developed by \citet{im2004} and \citet{s2009}.
This kind of confidence interval does {\em not} cover the true bounds (an interval) but {\em does} cover the true STE (a single point) with a desired confidence level.
This is also implemented in \cref{sec:micro}.

\subsubsection{Normal Distribution Example} \label{sec:ste:normal}

\begin{figure}[t]
\centering
\begin{subfigure}[t]{0.29\textwidth}
\centering
\includegraphics[page=1]{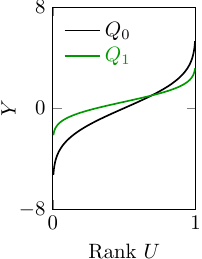}
\caption{Quantile functions.}
\label{fig:steexa:Q}
\end{subfigure}
\ 
\begin{subfigure}[t]{0.29\textwidth}
\centering
\includegraphics[page=2]{fig_STEexa.pdf}
\caption{STE for $\{U<b\}$.}
\label{fig:steexa:L}
\end{subfigure}
\ 
\begin{subfigure}[t]{0.38\textwidth}
\centering
\includegraphics[page=3]{fig_STEexa.pdf}
\caption{STE for $\{U>a\}$.}
\label{fig:steexa:R}
\end{subfigure}
\caption{Illustration of the STE bounds. The left gives an example when marginal distributions are normal. When $(Y_0,Y_1)$ is jointly normal, we can compute the unidentified population STE. The middle figure plots the STE for $\{U<b\}$ and the right for $\{U>a\}$ for various values of correlation as color\hyp{}coded by the color bar. The shaded areas indicate the interiors of our bounds. Perfect correlation cases attain the bounds.}
\label{fig:steexa}
\end{figure}

We illustrate the STE using an example where the outcomes are normally distributed.
Let $Y_{i0}\sim N(0,2)$ and $Y_{i1}\sim N(1/2,1)$, so their quantile functions look like \Cref{fig:steexa:Q}.
If we specify their joint distribution to be normal, we can compute the unidentified population STE.

\Cref{fig:steexa:L} plots the STE,
\[
	\mathbb{E}[Y_{i1}-Y_{i0}\mid U_i<b],
\]
as a function of $b$ for various values of correlation between $Y_{i0}$ and $Y_{i1}$, namely for $\rho=-1$, $-0.5$, $0$, $0.5$, and $1$.
The color represents the value of correlation as given by the color bar on the right.

For every value of $b$, the population $\{U<b\}$ receives the lowest STE when $Y_{i0}$ and $Y_{i1}$ are perfectly correlated, and the highest when they are perfectly negatively correlated.
At any rate, we can infer that the lower group receives the STE higher than the ATE.
The gray shaded area represents the region covered by our bounds.

\Cref{fig:steexa:R} plots the STE for $\mathbb{E}[Y_{i1}-Y_{i0}\mid U_i>a]$.
On the contrary, the population $\{U>a\}$ receives the STE lower than the ATE.
This STE is the highest when the outcomes are perfectly correlated and the lowest when perfectly negatively correlated.

In applications where rank migration is mild, we expect the true STEs to be close to the lower bound for $\{U<b\}$ and the upper bound for $\{U>a\}$.

\subsubsection{Welfare Bounds}

Sometimes the ATE is used for welfare analysis.
We may also derive the bounds on possibly {\em non\hyp{}utilitarian} welfare.
In fact, \cref{thm:ste} is an immediate corollary of the following result.

\begin{thm}[Bounds on subgroup welfare] \label{lem:1}
Let $f:\mathbb{R}\to\mathbb{R}$ be a nondecreasing convex function and $g:\mathbb{R}\to\mathbb{R}$ a nonincreasing convex function.
For every $0\leq a<b\leq 1$,
\begin{align*}
	\mathbb{E}[f(Y_{i1}-Y_{i0})\mathbbm{1}\{a<U_i<b\}]&\geq\int_a^b f(Q_1(u-a)-Q_0(u))du,\\
	\mathbb{E}[f(Y_{i1}-Y_{i0})\mathbbm{1}\{a<U_i<b\}]&\leq\int_a^b f(Q_1(1-u+a)-Q_0(u))du,
\end{align*}
and
\begin{align*}
	\mathbb{E}[g(Y_{i1}-Y_{i0})\mathbbm{1}\{a<U_i<b\}]&\geq\int_a^b g(Q_1(1-b+u)-Q_0(u))du,\\
	\mathbb{E}[g(Y_{i1}-Y_{i0})\mathbbm{1}\{a<U_i<b\}]&\leq\int_a^b g(Q_1(b-u)-Q_0(u))du,
\end{align*}
provided that the expectations exist.
For each of the bounds, there exists a joint distribution of $(Y_0,Y_1)$ that attains it.
\end{thm}

Non\hyp{}utilitarian welfare might arise as a consequence of loss aversion.
When the provider of the treatment may be held responsible for the {\em negative} effects of the treatment, such as doctors \citep{b2009} or policymakers \citep{nh2020}, the provider might be interested in maximizing a non\hyp{}utilitarian welfare.%
\footnote{On a related note, \citet{hsc1997} questioned utilitarian welfare in the context where individuals can choose treatment.}
In \cref{sec:micro}, we use this result to illustrate the policy that maximizes the worst\hyp{}case welfare.

While the claim is intuitive to understand, the proof without a distributional assumption on $Y$ calls for extention of classical optimal transport to conditioning on the subgroup.
We achieve this by alternating the characterizations of the bound and of the optimal transport plan, as detailed for \cref{thm:sosd} in \cref{sec:theory}.

\cref{lem:1} can also be used to bound the positive and negative STEs,
\begin{gather*}
	\mathbb{E}[\max\{Y_{i1}-Y_{i0},0\}\mid a<U_i<b],\\
	\mathbb{E}[\min\{Y_{i1}-Y_{i0},0\}\mid a<U_i<b].
\end{gather*}
These bounds may help policymakers examine the sign of the treatment effects.
Note that the sum of the bounds on positive and negative effects does not give the tightest bound on the net effect as given in \cref{thm:ste}.
Therefore, if we want the bounds on all of the positive, negative, and net effects, they need to be calculated for each case separately.

\subsection{Subgroup Proportions of Winners} \label{sec:spw}

Now we turn to the SPW.
We introduce the notation $[x]_+=\max\{x,0\}$ and $F(a-)=\lim_{x\nearrow a}F(x)$.
The mirror image of the SPW is the {\em subgroup proportion of losers (SPL)},
\[
	P(Y_{i1}<Y_{i0}\mid a<U_i<b).
\]

\subsubsection{Statement}

The following theorem characterizes the sharp lower and upper bounds on the SPW and SPL for the subgroup of interest $\{a<U_i<b\}$.

\begin{thm}[Bounds on subgroup proportions of winners and losers] \label{thm:spw}
For every $0\leq a<b\leq 1$,
\begin{align}
	P(Y_{i1}>Y_{i0}\mid a<U_i<b)&\geq\frac{1}{b-a}\sup_{u\in(a,b)}[u-a-F_1(Q_0(u))]_+,\label{thm:spw:1}\\
	P(Y_{i1}>Y_{i0}\mid a<U_i<b)&\leq 1-\frac{1}{b-a}\sup_{u\in(a,b)}[b-u-1+F_1(Q_0(u))]_+,\notag
\end{align}
and
\begin{align}
	P(Y_{i1}<Y_{i0}\mid a<U_i<b)&\geq\frac{1}{b-a}\sup_{u\in(a,b)}[b-u-1+F_1(Q_0(u)-)]_+,\label{thm:spw:2}\\
	P(Y_{i1}<Y_{i0}\mid a<U_i<b)&\leq1-\frac{1}{b-a}\sup_{u\in(a,b)}[u-a-F_1(Q_0(u)-)]_+.\notag
\end{align}
For each of the lower bounds, there exists a joint distribution of $(Y_0,Y_1)$ that attains it; for each of the upper bounds, there exists a joint distribution of $(Y_0,Y_1)$ that makes it arbitrarily tight.
\end{thm}

Similarly as the STE, these bounds are the most informative when we let either $a=0$ or $b=1$.
In practice, therefore, we recommend fixing $a=0$ and looking at the bounds as functions of $b$, or fixing $b=1$ and looking at the bounds as functions of $a$.

\begin{figure}
\centering
\begin{subfigure}[t]{0.48\textwidth}
\centering
\includegraphics[page=1]{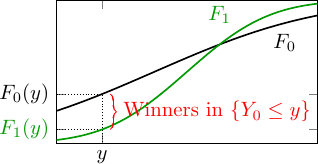}
\caption{For the lower bound of SPW.}
\label{fig:spw:1}
\end{subfigure}
\quad
\begin{subfigure}[t]{0.48\textwidth}
\centering
\includegraphics[page=2]{fig_spw.pdf}
\caption{For the lower bound of SPL.}
\label{fig:spw:2}
\end{subfigure}
\caption{Intuition of the SPW bounds. The black and green lines indicate the cdfs for the control and the treated.}
\label{fig:spw}
\end{figure}

The analytical formulas of the SPW bounds are more complicated than those of the STE bounds, but we may nevertheless attempt to intuit them as in \Cref{fig:spw}.
Since the upper bounds are complements of the lower bounds, it suffices to focus on the lower bounds.
Assume for simplicity that both outcomes are continuous and let $a=0$.
For an arbitrary value of outcome $y$, the portion of individuals whose $Y_0$ is below $y$ is given by $F_0(y)$, and the portion whose $Y_1$ is below $y$ by $F_1(y)$ (\Cref{fig:spw:1}).
If $F_0(y)>F_1(y)$, then at least $F_0(y)-F_1(y)$ portion of individuals in the subgroup $\{Y_0\leq y\}$ must move from $Y_0\leq y$ to $Y_1>y$.
They are obviously winners.
Since this is true for every value of $y$, we can take the supremum in the range of interest.
Redefining $y=Q_0(u)$ gives (\ref{thm:spw:1}).

Next, suppose instead that $F_1(y)>F_0(y)$ as in \Cref{fig:spw:2}.
Then, the difference $F_1(y)-F_0(y)$ gives the sure losers in the group $\{Y_0>y\}$.
Meanwhile, this group contains $1-b$ portion of individuals who do not belong to the subgroup of interest.
The worst scenario is that all of those individuals count toward the sure losers.
Thus, the difference
\[
	[F_1(y)-F_0(y)]-(1-b)
\]
provides the lower bound of losers in the subgroup $\{U<b\}$.
Redefining $y=Q_0(u)$ and taking the supremum in the range produce (\ref{thm:spw:2}).
This intuition may smack as loose, but it turns out to be sharp.
The proof in \cref{sec:theory} contains a construction of the joint distribution that attains this bound.

Note that the conditional probabilities in \cref{thm:spw} are obviously continuous in $(a,b)$ even when $Y_0$ and $Y_1$ are discrete since $U$ is continuously distributed, while the bounds appear susceptible to discontinuities.
Somewhat surprisingly, the bounds are shown to be continuous, making them sharp regardless of the distributions of $Y$.

\cref{thm:spw} can be trivially extended to bound
\[
	P(Y_{i1}-Y_{i0}<c\mid a<U_i<b)
\]
for arbitrary $c$ by shifting the distribution of either $Y_{i0}$ or $Y_{i1}$.
If we then set $a=0$ and $b=1$, the bounds reduce to the classical Makarov bounds \citep{m1981}.
In this sense, \cref{thm:spw} generalizes the Makarov bounds to conditioning on $U_i$.

Note that the bounds in \cref{thm:spw} are in the form of {\em intersection}, that is, the bounds are given by suprema.
Simple plug\hyp{}in estimators for the bounds of this kind can be severely biased in finite samples.
\citet{clr2013} developed a procedure that gives an asymptotically median\hyp{}bias\hyp{}corrected estimator of the bounds as well as an asymptotically valid confidence interval for the partially identified parameter.
The idea is to approximate the supremand function by a Gaussian process and adjust for the precision of its estimator.
The estimation procedure goes as follows.
\begin{enumerate}
	\item Estimate the covariance function of the supremand function.
	\item Simulate the zero\hyp{}mean Gaussian process with the corresponding covariance function and obtain the median of its supremum (over a shrinking set).
	\item Subtract the median times the standard error function from the supremand function.
	\item Maximize the precision\hyp{}corrected supremand function.
\end{enumerate}
In the application in \cref{sec:labor}, we use this method to construct the estimators of our bounds and the pointwise confidence intervals for the SPW.
As before, the confidence interval thusly constructed does not cover the true bounds but covers the true SPW with a specified confidence level.

Finally, as the bounds are attained at somewhat quirky distributions, we may be interested in restricting the set of joint distributions we consider.
In the most general case, we may numerically solve the optimal transport problem for the SPW or SPL with respect to the restricted set, but we may lose the closed\hyp{}form expressions of the bounds.
In \cref{sec:labor:tighten}, we restrict the support of the joint distribution in a tractable way so that we maintain the closed\hyp{}form expressions given by \cref{thm:spw} while still tightening the bounds.

\subsubsection{Normal Distribution Example}

\begin{figure}[t]
\centering
\begin{subfigure}[t]{0.29\textwidth}
\centering
\includegraphics[page=1]{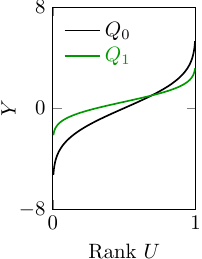}
\caption{Quantile functions.}
\label{fig:spwexa:Q}
\end{subfigure}
\ 
\begin{subfigure}[t]{0.29\textwidth}
\centering
\includegraphics[page=2]{fig_SPWexa.pdf}
\caption{SPW for $\{U<b\}$.}
\label{fig:spwexa:L}
\end{subfigure}
\ 
\begin{subfigure}[t]{0.38\textwidth}
\centering
\includegraphics[page=3]{fig_SPWexa.pdf}
\caption{SPW for $\{U>a\}$.}
\label{fig:spwexa:R}
\end{subfigure}
\caption{Illustration of the SPW bounds. The left gives an example when marginal distributions are normal. When $(Y_0,Y_1)$ is jointly normal, we can compute the unidentified population SPW. The middle figure plots the SPW for $\{U<b\}$ and the right for $\{U>a\}$ for various values of correlation as color\hyp{}coded by the color bar. The shaded areas indicate the interiors of our bounds.}
\label{fig:spwexa}
\end{figure}

We use the same example as \cref{sec:ste:normal} and illustrate the SPW.
The marginal distributions are maintained to be $Y_{i0}\sim N(0,2)$ and $Y_{i1}\sim N(1/2,1)$.
If we impose joint normality, we can compute the unidentified population SPW.

If they are perfectly correlated, roughly the lower 69\% of the population gains from the treatment, and the upper 31\% loses.
Therefore, the true SPW is one until $b=0.69$ and then decreases to 0.69 as we approach $b=1$ (the dark red line in \Cref{fig:spwexa:L}).
Similarly, if the outcomes are perfectly negatively correlated, the lower 57\% gains from the treatment; this is the dark blue line in \Cref{fig:spwexa:L}.
Other correlation values yield nondegenerate distributions, and there are both winners and losers at every rank.

The gray area gives the region between our bounds.
For example, at $b=0.3$, the value of the lower bound is $0.80$.
This means that among the lower 30\% of the population in terms of $Y_{i0}$, at least 80\% of them must be winners, $Y_{i1}>Y_{i0}$.
This drops to 50\% when we expand the subgroup to $b=0.5$.
At $b=0.95$, the lower bound is $0.26$ and the upper bound is $0.96$.
This means that at least 26\% of $\{U_i<0.95\}$ are winners and at least 4\% are losers; the remaining 70\% may be a winner, a loser, or neither.

The losers are easier to identify on the right tail.
In the right figure, at $a=0.9$, the value of the upper bound is $0.20$.
Note that the flipside of a winner is a loser.%
\footnote{In this example, $Y_{i0}=Y_{i1}$ occurs with probability zero.}
Therefore, we can interpret that among the upper 10\% subpopulation, about 80\% of them have to be losers.

Note that these bounds are pointwise tight, but there is usually no joint distribution that attains the bounds uniformly over $a$ or $b$.

\section{Application 1: STE for Microfinance} \label{sec:micro}

We present an empirical application to illustrate how our methods can be used to draw information on the heterogeneous treatment effects in practice.
We estimate the bounds on the effect of access to microfinance on the total value of livestock for the poor.
We borrow the setup and data from \citet{tdj2015}, who analyzed the RCT of microfinance conducted in rural Ethiopia from 2003 to 2006.%
\footnote{The replication files were downloaded at an author's website \citep{tdj2015dta}.}

The brief outline of the experiment is as follows.
Randomization was carried out at the Peasant Association (PA) level, a local administrative unit.
Out of 133 PAs, 34 PAs were randomly assigned to the treatment and 33 PAs to the control.
The remaining 66 PAs were assigned to a different combination of treatments, which we will not use.
The implementation agencies sometimes did not comply with the experimental protocol, and the actual treatment coincided with the assignment in 78\% of the cases.
In this sense, our analysis is on the ITT---the effect of random assignment---following \citet{tdj2015}.
We will call this the ATE throughout this section.

The data we use are from the postintervention survey, but there was also a preintervention survey where individuals were not tracked between the two.
Therefore, another possible direction, which we will not pursue, is to look at the effects on the treated using the framework of \citet{ai2006}.
We leave further details of the experiment to \citet{tdj2015}.

\citet{tdj2015} noted that ``most loans were initiated to fund crop cultivation or animal husbandry, with 80 percent of the 1,388 loans used for working capital or investment in these sectors . . .''
In light of this, we choose the total value of livestock owned as the outcome of interest.

\subsection{Do the Poor Benefit from Microfinance?}

We are interested in knowing how the assignment to microfinance impacts those who are most in need, i.e., those who would attain low outcomes in the absence of the treatment.%
\footnote{\citet{acw2016} stated that ``many researchers and policymakers are interested in estimating how treatments affect those most in need of help, that is, those who would attain unfavorable outcomes in the absence of the treatment.''}

We let $Y_i$ be individual $i$'s total value of livestock owned.
It is in the local currency units, Birr, in their 2006 value.%
\footnote{In January 2006, 100 Birr was roughly worth 11.4 USD.}
We define ``those in need'' as follows.
Sort individuals based on their (unobservable) potential outcome under control, $Y_{i0}$.
Then, take a sequence of cumulative subgroups whose $Y_{i0}$ belongs to the lower tail.
Precisely, using the rank notation, we look at the subgroups $\{U_i<b\}$ as we vary $b$ from $0$ to $1$.

First, the ATE is estimated to be 226.50 Birr, but the 95\% confidence interval contains 0, ranging from $-12.43$ to 469.69.
However, this does not mean that the data have nothing to say about the effects on the low $Y_{i0}$.
\Cref{fig:tdj2015:ste} plots the bounds on the STE characterized by \cref{thm:ste},
\[
	\mathbb{E}[Y_{i1}-Y_{i0}\mid U_i<b],
\]
for various values of $0<b\leq 1$.
The thick black line gives the lower bound of the STE for each $b$, and the dashed line the upper bound.
As mentioned in \cref{sec:ste}, since we set $a=0$ in \cref{thm:ste}, the lower bound gives a meaningful picture but the upper bound is of little value.
As the upper bound becomes unreasonably high, we show only the relevant range.

Note that the STE is point\hyp{}identified and equals the ATE when $b=1$; thus, the two bounds converge to the ATE at $b=1$.
Between $b=0$ to $b=0.169$, the lower bound is equal to zero.
This is because in both the treated and the control groups, we have about 17\% of individuals who do not possess any livestock.

\begin{figure}
\centering
\includegraphics[page=2]{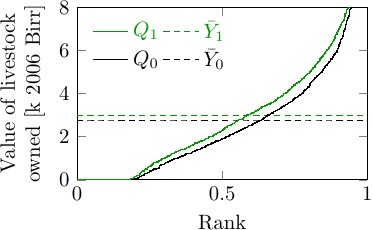}
\caption{Bounds for the STEs on the total value of livestock owned. The black lines indicate the estimated lower bounds for $\mathbb{E}[Y_1-Y_0\mid U<b]$ for each $b$. The gray areas are the pointwise 95\% confidence intervals.}
\label{fig:tdj2015:ste}
\end{figure}

The shaded area presents the pointwise 95\% confidence interval for each $b$, constructed with the method developed by \citet{im2004}.
Although hard to see it in the figure, at $b=1$, the lower end of the confidence interval swiftly crosses $0$, yielding an insignificant ATE estimate.
The lower confidence interval is in fact positive from $b=0.271$ to $b=0.999$.
Thus, unlike the ATE, the STE for those who possess a small to medium amount of livestock is estimated to be significantly positive.

One possible factor behind the insignificant ATE is that the large values of livestock of the ``rich'' were highly volatile.
As is the case for the income or wealth distribution, we see those who have so much that a small multiplicative variation causes high variation in the average.

\subsection{Who Should Be Treated? A Welfare Analysis}

The heterogeneous effects across different levels of $Y_{i0}$ motivate optimizing assignment based on pre\hyp{}treatment $Y_i$.
If we have panel data in which pre\hyp{}treatment outcome is observed and is associated with individuals in the post\hyp{}treatment period, we may simply include it as a covariate and proceed on to maximizing the empirical welfare as in \citet{kt2018}.
If we do not have such data, we might want to find a proxy for the pre\hyp{}treatment outcome.
A natural proxy for the pre\hyp{}treatment outcome is the potential outcome $Y_{i0}$, especially in applications where rapid migration of ranks is considered uncommon.%
\footnote{Another proxy would be a fitted outcome predicted by covariates \citep{acw2016}.}

The data we use consist of two cross\hyp{}sectional surveys, so it is not an option to include the pre\hyp{}treatment outcome at the individual level.
Moreover, in the application where motivation is drawn from ``helping the poor escape the poverty trap,'' uninterventional rank migration is expected to be limited, at least between the poor and the rich.%
\footnote{Note that rank migration {\em within} the subgroup does not affect the value of the STE or its bounds.}
Thus, we consider an optimal treatment assignment based on $Y_{i0}$ that maximizes welfare to be a good policy\hyp{}relevant measure.

We consider an assignment rule in which individuals with values of livestock below some cutoff are granted access to microfinance.
Let $b$ denote the cutoff of assignment in terms of the rank.
Note that we can also state the equivalent assignment rule in terms of the value of livestock, in which case the cutoff is $Q_0(b)$.

We consider two welfare functions.
In the first welfare model, we let the {\em social benefit} equal to the aggregate treatment effect, $\mathbb{E}[(Y_{i1}-Y_{i0})\mathbbm{1}\{U_i<b\}]$, which is normalized to the per\hyp{}capita level.
The {\em social cost}, on the other hand, is set to 100 Birr per individual for the following reasoning.
The average amount of outstanding loans from microfinance for a treated individual was 299 Birr, and we are assuming conservatively that one third of it goes sour.
This is admittedly an assumption too simplistic for a policy analysis, and in reality the default rate would depend on the initial level of $Y$ as well as on many other factors.
However, to concentrate on the key ideas of the paper, we will not seek to find more realistic versions of the social costs.

\begin{figure}
\centering
\includegraphics[page=8]{fig_main_tdj2015_MF_2.pdf}
\caption{Welfare bounds when the treatment is assigned to individuals whose $Y_0$ is below rank $b$. The solid blue line gives the lower bound for the benefits, and the thin dashed blue line gives the upper bound. The red line is the social costs, which is taken to be 100 Birr per treated individual. The blue area is the pointwise 95\% confidence interval for the social benefit. The vertical dotted line ($b=0.96$) is where the worst\hyp{}case welfare is maximized.}
\label{fig:tdj2015:wel}
\end{figure}

In sum, the first social welfare we consider is
\[
	\mathbb{E}[(Y_{i1}-Y_{i0})\mathbbm{1}\{U_i<b\}]-100\cdot\mathbb{E}[\mathbbm{1}\{U_i<b\}].
\]
\Cref{fig:tdj2015:wel} shows the social benefit bounds and social costs as functions of $b$.
The solid blue line is the lower bound for the social benefits (the worst\hyp{}case social benefits), and the dashed blue line is the upper bound.
The solid red line is the social costs.
The shaded blue area is the pointwise 95\% confidence interval for the social benefits.
From $b=0.346$ to $b=1$, the worst\hyp{}case social benefits surpasses the social costs.

To determine the optimal assignment rule, we take the minimax approach: assign the treatment to maximize the worst\hyp{}case welfare.
The worst\hyp{}case welfare corresponds to the lower bound of the welfare.
Thus, the optimal assignment rule is characterized by the cutoff value $b$ that maximizes the difference between the social benefit lower bound and the social cost, which is $b=0.96$.
This is equivalent to providing microfinance to those whose value of livestock is below 9,200 Birr.

Next, we consider a non\hyp{}utilitarian welfare that exhibits loss aversion.
Suppose that the policymaker weighs the loss 10\% more than the benefits in individual $Y_i$, that is, for a function
\[
	h(x)=\begin{cases}x&x\geq 0,\\1.1 x&x<0,\end{cases}
\]
define the second social welfare by
\[
	\mathbb{E}[h(Y_{i1}-Y_{i0})\mathbbm{1}\{U_i<b\}].
\]
Here, we assume that the social costs are already built in to the loss\hyp{}aversive welfare function of the policymaker.

\begin{figure}
\centering
\includegraphics[page=9]{fig_main_tdj2015_MF_2.pdf}
\caption{Non\hyp{}utilitarian welfare bounds when the treatment is assigned to individuals whose $Y_0$ is below rank $b$. The welfare function weighs the loss 10\% more than the gain. The solid blue line gives the lower bound for the welfare, and the thin dashed blue line gives the upper bound. The blue area is the pointwise 95\% confidence interval for the social welfare. The dotted line ($b=0.928$) is where the worst\hyp{}case welfare is maximized.}
\label{fig:tdj2015:wel2}
\end{figure}

\Cref{fig:tdj2015:wel2} shows the social welfare bounds as functions of $b$.
The solid blue line is the lower bound (the worst\hyp{}case welfare) and the dashed blue line is the upper bound, as characterized by \cref{lem:1}.
The shaded blue area is the pointwise 95\% confidence interval for the social welfare.
From $b=0.357$ to $b=0.976$, the worst\hyp{}case welfare is pointwise significantly positive.

The optimal assignment rule that maximizes the worst\hyp{}case welfare is found to be $b=0.928$, which is equivalent to treating everyone below 7,299 Birr in the total value of livestock owned.

\section{Application 2: SPW for Welfare Reform} \label{sec:labor}

This section presents an empirical application showing how our bounds can be used to provide additional information in the context of testing economic theory.
We estimate the bounds on the proportion of workers who earns more under the new welfare program than the counterfactual under the old program.
The motivation and data are borrowed from \citet{bgh2006}.%
\footnote{The replication files are accessible at the journal's website \citep{bgh2006dta}.}

The summary of the background is as follows.
Aid to Families with Dependent Children (AFDC) was a federal assistance program started in 1935 that required all states to provide financial support to children whose families had low or no income.
During the 1990s, the federal government waived portions of the requirements and allowed states to make changes to various aspects of the program, such as expanding income disregards, increasing work requirements, and introducing time limits on benefits.
In exchange for the waiver, states were required to conduct rigorous evaluations of the impacts of these changes.
Connecticut introduced the Jobs First program as the AFDC waiver, and for the rigorous evaluation, conducted the random assignment study of Jobs First.
The experiment took place between January 1996 and February 1997.
We refer the reader to \citet{bgh2006} for more details.

\citet{bgh2006} stated that
\begin{quote}
. . . theory makes heterogeneous predictions concerning the sign and magnitude of the response of labor supply and welfare use to these reforms. . . . the vast majority of welfare reform studies rely on estimating mean impacts. Theory predicts that these mean impacts will average together positive and negative labor supply responses, possibly obscuring the extent of welfare reform's effects.
\end{quote}
\citet{bgh2006} further used the QTE to demonstrate that the effects were heterogeneous across different levels of the potential outcome variables.
In this section, we complement their analysis by identifying maximum levels of winners and losers given by our bounds.

\begin{figure}
\centering
\begin{subfigure}[t]{0.48\textwidth}
\centering
\includegraphics[page=1]{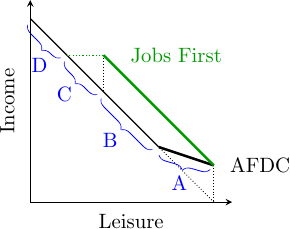}
\caption{Budget constraints before time limit.}
\label{fig:bgh2006:1}
\end{subfigure}
\quad
\begin{subfigure}[t]{0.48\textwidth}
\centering
\includegraphics[page=2]{fig_bgh2006.pdf}
\caption{Budget constraints after time limit.}
\label{fig:bgh2006:2}
\end{subfigure}
\caption{Income-leisure budget constraints under AFDC and under Jobs First \citep[Figure 1]{bgh2006}. Food stamp benefits are not included in these figures.}
\label{fig:bgh2006}
\end{figure}

\Cref{fig:bgh2006} illustrates the stylized budget constraints faced by women supported by AFDC and Jobs First.
The horizontal axis is the time for leisure, which is a complement of working hours, and the vertical axis is the income.
Assuming the standard utility theory, each program participant enjoys the income and leisure on the budget constraint.
Under AFDC, the budget constraint is given by the black solid line in each figure of \Cref{fig:bgh2006}.
If an individual picks a point on segment A, it means that she receives benefits from AFDC.
Jobs First replaced the AFDC benefits with the green solid line, meaningfully raising the budget constraint for individuals on A, B, and C.
The Jobs First program also came with a time limit; after 21 months, the budget constraint is pushed down to just the original black line with no benefits.
Before the time limit (\Cref{fig:bgh2006:1}), individuals on segments A, B, and C would move to somewhere on the green line under Jobs First; individuals on segment D may or may not move to a point on the green line.
After the time limit (\Cref{fig:bgh2006:2}), individuals would receive zero cash transfers and become weakly worse off than AFDC.%
\footnote{They may still have received food stamps.}

While \Cref{fig:bgh2006} provides a helpful picture, the actual implementation was different from the stylized description.
The most notable difference was an extension of the deadline.
In a fair amount of cases, either an extension of 6 months or an indefinite exemption from the time limit was granted.
Despite this, \citet[Figure 2]{bgh2006} provided evidence that the time limit still mattered.
However, it is important to keep in mind that there were beneficiaries that continued to receive benefits after the time limit.

\subsection{How Did the New Program Affect Hours Worked?}

We are interested in whether the new program increases or decreases earnings, which are roughly proportional to the hours worked.

\begin{figure}[t]
\centering
\begin{subfigure}[t]{0.32\textwidth}
\centering
\includegraphics[page=1]{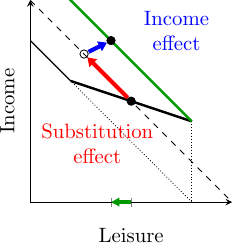}
\caption{When A works more before time limit.}
\label{fig:bgh2006:effects:1}
\end{subfigure}
\ 
\begin{subfigure}[t]{0.32\textwidth}
\centering
\includegraphics[page=2]{fig_bgh2006_effects.pdf}
\caption{When A works less before time limit.}
\label{fig:bgh2006:effects:2}
\end{subfigure}
\ 
\begin{subfigure}[t]{0.32\textwidth}
\centering
\includegraphics[page=3]{fig_bgh2006_effects.pdf}
\caption{When A works more after time limit.}
\label{fig:bgh2006:effects:3}
\end{subfigure}
\caption{Decomposition of the program effect into the Slutsky substitution effect and the Slutsky income effect for the subpopulation on segment A in \Cref{fig:bgh2006}. The total program effect on leisure is the green arrow on the $x$\hyp{}axis.}
\label{fig:bgh2006:effects}
\end{figure}

There are three subpopulations of interest.
We assume that each individual faces the same wage under the new and old programs, one of which is counterfactual.
We do {\em not} assume that the wage is common across individuals.%
\footnote{However, we can only measure hours worked by earnings, so SPWs are conditional on earnings.}
Under the new program, the following are the predictions of the stylized theory.
\begin{enumerate}[i.]
	\item The subpopulation on segment A in \Cref{fig:bgh2006:1} might work more or work less, depending on the magnitudes of the substitution effect (increases hours worked) and the income effect (decreases hours worked). This is shown in \Cref{fig:bgh2006:effects:1,fig:bgh2006:effects:2}.
		\label{q1}
	\item The subpopulation on segment D in \Cref{fig:bgh2006:1} might work less or work as much.
		\label{q2}
	\item The subpopulation on segment A in \Cref{fig:bgh2006:2} would work more if leisure is a normal good. This is shown in \Cref{fig:bgh2006:effects:3}.
		\label{q3}
\end{enumerate}

\begin{figure}[t]
\centering
\begin{subfigure}[t]{0.32\textwidth}
\centering
\includegraphics[page=2]{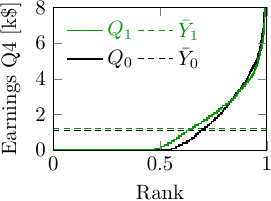}
\caption{SPW for $\{U<b\}$ in Q4.}
\label{fig:bgh2006:ernq4:L}
\end{subfigure}
\ 
\begin{subfigure}[t]{0.32\textwidth}
\centering
\includegraphics[page=3]{fig_main_bgh2006AER_ernq_4_new.pdf}
\caption{SPW for $\{U>a\}$ in Q4.}
\label{fig:bgh2006:ernq4:R}
\end{subfigure}
\ 
\begin{subfigure}[t]{0.32\textwidth}
\centering
\includegraphics[page=2]{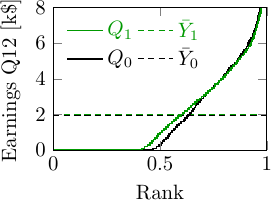}
\caption{SPW for $\{U<b\}$ in Q12.}
\label{fig:bgh2006:ernq12:L}
\end{subfigure}
\caption{Estimated bounds on the SPW in earnings in post\hyp{}treatment quarters 4 and 12. A small fraction of significant winners are found on the left in Q4 and Q12.}
\label{fig:bgh2006:ernq}
\end{figure}

First, we examine the sign of the program effects on subpopulation (\ref{q1}).
For this, we plot the estimated bounds on the SPW on earnings in the 4th post\hyp{}treatment quarter (Q4) for the subgroup defined by $\{U_i<b\}$ as a function of $b$ in \Cref{fig:bgh2006:ernq4:L}.
The black line on the bottom that is zero until $b=0.47$ and takes a nonzero value thereafter is the estimated lower bound of the SPW
\[
	P(Y_{i1}>Y_{i0}\mid U_i<b),
\]
where $Y_i$ is $i$'s earnings.
The black line that is almost identical to one is the upper bound.
These are the median\hyp{}bias\hyp{}corrected estimators given by \citet{clr2013}.
For this SPW, the upper bound is uninformative, while the lower bound identifies some winners who increased hours worked under the new program.
Namely, the lower bound takes a maximum of $0.13$ at $b=0.55$.
This means that, among the workers whose earnings are below the 55\% quantile under the old program, at least 13\% of them worked more under the new program.
The gray area is the pointwise 95\% confidence interval constructed by the method of \citet{clr2013}.

These numbers are modest, but seem coherent with the QTE estimates of \citet[Figure 3]{bgh2006}.
Considering that the theoretical prediction was ambiguous, the bounds successfully identified a small yet significant portion of individuals whom the new program encouraged to work more.
We did not identify losers, who worked less because of the new program.
The lack of evidence does not mean the lack of losers, but the data was not decisive enough to say anything about their existence or prevalence.
In \cref{sec:labor:tighten}, we will see that an additional assumption can sometimes tighten the bounds.

Second, we examine whether the possible negative effect on subpopulation (\ref{q2}) can be identified in the data.
\Cref{fig:bgh2006:ernq4:R} plots the estimated bounds on the SPW for $\{U_i>a\}$ as a function of $a$.
As before, the black line at the bottom is the lower bound, and the one at the top is the upper bound.
The lower bound is zero between $a=0.055$ and $1$.
The upper bound is slightly off $1$, but the confidence interval shows that it is not significantly away from $1$ at any value of $a$.
In short, we do not see a statistically significant amount of losers on the right tail of the earnings distribution.

Third, after the time limit, the new program stopped providing benefits to some individuals, which was expected to encourage them to work more for subpopulation (\ref{q3}).
We examine this effect in \Cref{fig:bgh2006:ernq12:L}.
It plots the bounds on the SPW on the earnings in the 12th post\hyp{}treatment quarter (Q12) for $\{U_i<b\}$ as a function of $b$.%
\footnote{By the 12th quarter, those who received a 6\hyp{}month extension would have stopped receiving benefits.}
The lower bound becomes positive after $b=0.42$ and takes the maximum of $0.097$ at $b=0.48$.
Therefore, among those whose earnings are below the 48\% quantile under the old program, at least 9.7\% of them work more under the new program.
The pointwise 95\% confidence interval leaves zero at $b=0.44$.
The fact that the bounds identify significant winners at lower values of $b$ compared to \Cref{fig:bgh2006:ernq4:L} is consistent with the observation that if you receive low to zero earnings, the anticipated program effect on the hours worked would be large, due to both the positive substitution effect and the positive income effect (\Cref{fig:bgh2006:effects:3}).

\subsection{Using Economic Theory to Tighten the Bounds} \label{sec:labor:tighten}

\begin{figure}[t]
\centering
\begin{subfigure}[t]{0.48\textwidth}
\centering
\includegraphics[page=1]{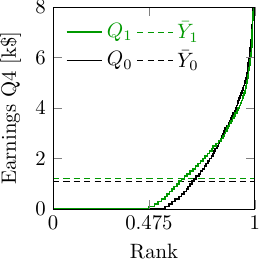}
\caption{Quantile functions.}
\label{fig:bgh2006:ernq4:Q}
\end{subfigure}
\quad
\begin{subfigure}[t]{0.48\textwidth}
\centering
\includegraphics[page=2]{fig_joint.pdf}
\caption{Domain restriction.}
\label{fig:bgh2006:ernq4:DR}
\end{subfigure}
\caption{Restriction imposed by utility theory. The lower 47.5\% of $Y_1$ is zero, which is matched with the lower 47.5\% of $Y_0$. As a result, the joint distribution is restricted to be supported on the shaded areas in (\subref{fig:bgh2006:ernq4:DR}).}
\label{fig:bgh2006:ernq:joint}
\end{figure}

Our bounds are nonparametrically tight, meaning that they characterize the maximum amounts of winners and losers without having to make additional assumptions.
However, it is also true that the estimated bounds in \Cref{fig:bgh2006:ernq} were only modestly informative.

In this section, we investigate how we can tighten the bounds if we are open to making more assumptions.
Recall that the theoretical prediction we examined came from a stylized utility theory.
If we take the utility theory for granted, we see that those who do not work under the new program would not work under the old program if leisure is a normal good (see \Cref{fig:bgh2006}).%
\footnote{The leisure being a normal good is only needed for after the time limit. Also, the converse does not hold in that those who do not work under the old program may work nonzero hours as discussed for subpopulation (\ref{q3}).}

This is in line with the data; in the 4th post\hyp{}treatment quarter, 47.5\% do not work under the new program while 55.4\% do not under the old program (\Cref{fig:bgh2006:ernq4:Q}).
Thus, we may assume that those 47.5\% in the treated go straight into the 55.4\% in the control.
The difference 7.9\% are those who work under the new program but do not under the old.
In light of this, we eliminate the 47.5\% from the treated sample as well as the 47.5\% off of the 55.4\% from the control sample.
This leaves us with a smaller sample, with which we compute the bounds as before.
This is equivalent to imposing a block\hyp{}diagonal support restriction on the joint distribution of the potential outcomes (\Cref{fig:bgh2006:ernq4:DR}).%
\footnote{\Cref{fig:bgh2006:ernq4:DR} can be seen as a restriction on the support of the copula density.}

The new bounds estimated with the refined samples are presented in \Cref{fig:bgh2006:ernq4:tight:L}.
The horizontal axis now starts from $b=0.475$ since we have eliminated the 47.5\% of non\hyp{}working individuals in both groups.
Everyone represented in this figure works under the new program.
Both the upper and lower bounds are one for the first 7.9\%.
Then, the lower bound decreases as we expand the subgroup.
This sharp decline suggests that we hardly identify any more winners than the first 7.9\%.

Meanwhile, the domain restriction identifies a small portion of significant losers on the upper tail (\Cref{fig:bgh2006:ernq4:tight:R}).
The upper end of the pointwise 95\% confidence interval is strictly below $1$ between $a=0$ and $a=0.90$.
Within this range, the estimated upper bound takes the minimum of $0.879$ at $a=0.859$.
This indicates that, among the subpopulation whose $Y_0$ is above \$4,500, at least 12.1\% of them work fewer hours under the new program.

In the 12th quarter, 46.6\% in the control did not work, and neither did 41.6\% in the treated.
Applying the same elimination procedure, the estimated bounds on the SPW for the subgroup $\{U_i<b\}$ are given in \Cref{fig:bgh2006:ernq12:tight:L}.
Like \Cref{fig:bgh2006:ernq4:tight:L}, we do not identify more winners than the initial 5.0\% who do not work under the old program.

\begin{figure}[t]
\centering
\begin{subfigure}[t]{0.32\textwidth}
\centering
\includegraphics[page=2]{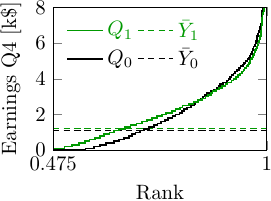}
\caption{SPW for $\{U<b\}$ in Q4.}
\label{fig:bgh2006:ernq4:tight:L}
\end{subfigure}
\ 
\begin{subfigure}[t]{0.32\textwidth}
\centering
\includegraphics[page=3]{fig_main_bgh2006AER_ernq_4_elim0_new.pdf}
\caption{SPW for $\{U>a\}$ in Q4.}
\label{fig:bgh2006:ernq4:tight:R}
\end{subfigure}
\ 
\begin{subfigure}[t]{0.32\textwidth}
\centering
\includegraphics[page=2]{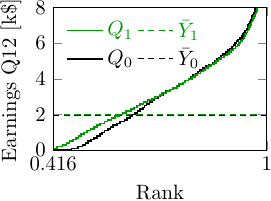}
\caption{SPW for $\{U<b\}$ in Q12.}
\label{fig:bgh2006:ernq12:tight:L}
\end{subfigure}
\caption{Tightened bounds on the SPW in earnings in Q4 and Q12 assuming utility theory.}
\label{fig:bgh2006:ernq4:tight}
\end{figure}

\section{Conclusion} \label{sec:concl}

Many empirical questions concern treatment effects at different levels of the outcome variables.
We developed two bounds to investigate heterogeneous treatment effects of this type.
These bounds can be computed with the estimated marginal distributions of the potential outcomes, and provide complementary information to the widely used QTE.
They are sharp without requiring additional assumptions.

The first bounds are on the STE of the form $\mathbb{E}[Y_{i1}-Y_{i0}\mid Y_{i0}<c]$.
In \cref{sec:micro}, we applied them to assess the effect of microfinance on the poor, and found that, despite the ATE being insignificant, the STE for various values of $c$ was significantly positive.
We also applied our bounds to the policy targeting problem.
We considered two measures of welfare: one linear in the treatment effect with per\hyp{}capita fixed costs and the other a non\hyp{}utilitarian welfare that embodied a loss\hyp{}aversive preference.
We illustrated how our bounds led to treatment assignment rules that maximized the worst\hyp{}case welfare for each welfare measure.

The second bounds are on the SPW of the form $P(Y_{i1}>Y_{i0}\mid Y_{i0}<c)$.
In \cref{sec:labor}, we investigated the heterogeneous effects of Connecticut's welfare reform on the earnings predicted by stylized labor supply theory.
For a subgroup for whom the theoretical prediction was ambiguous, we detected a portion of individuals who increased hours worked because of the new welfare program.
We also illustrated how economic theory could be used to tighten the bounds, and after tightening, we detected a small portion of individuals who decreased hours worked because of the new welfare program.

We conclude by noting that our results are applicable well beyond the RCT setups picked up in our empirical applications.
The bounds are based solely on the marginal distributions of the potential outcomes, which most QTE estimation procedures estimate as byproducts.
For example, there are existing QTE estimators for the RCT with imperfect compliance, difference\hyp{}in\hyp{}differences, selection\hyp{}on\hyp{}observables, regression discontinuity designs, and instrumental variables regression models.
Our bounds provide interpretable assessment of the heterogeneity of treatment effects that supplements the QTE.
The Stata program for our bounds can be downloaded at an \href{https://github.com/jcao0/subgroup-treatment-effects}{author's website}.

\newpage

\begin{appendices}
\appendix
\appendixpage

\section{Proofs} \label{sec:theory}

\cref{thm:ste} is an obvious corollary of \cref{lem:1}.
To prove \cref{lem:1}, we first obtain the second\hyp{}order stochastic dominance bounds on the subgroup distribution of treatment effects.
This extends the existing results on the entire distribution of treatment effects \citep[Lemma 2.2]{fp2010}.

\begin{thm}[Second\hyp{}order stochastic dominance bounds on subgroup distribution of treatment effects] \label{thm:sosd}
Let $0\leq a<b\leq 1$ and $U\sim U(0,1)$ such that $Y_0=Q_0(U)$.
Conditional on $a<U<b$, we have
\[
	Q_1(b-U)-Q_0(U)\preceq_{2}Y_1-Y_0\preceq_{2}Q_1(1-b+U)-Q_0(U),
\]
where $A \preceq_{2} B$ means that the distribution of $A$ is second\hyp{}order stochastically dominated by that of $B$.
\end{thm}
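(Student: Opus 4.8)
The plan is to verify both inequalities directly from the definition of the increasing-concave (second-order stochastic dominance) order: it is enough to show, for every bounded nondecreasing concave $\phi$, that $\mathbb{E}[\phi(Q_1(b-U)-Q_0(U))\mid a<U<b]\le\mathbb{E}[\phi(Y_1-Y_0)\mid a<U<b]\le\mathbb{E}[\phi(Q_1(1-b+U)-Q_0(U))\mid a<U<b]$ no matter what the (unidentified) joint law of $(Y_0,Y_1)$ is. First I would put $Y_1$ into quantile form via the distributional transform: there is a $U(0,1)$ variable $R$ with $Y_1=Q_1(R)$ almost surely (take $R=F_1(Y_1)$ when $F_1$ is continuous, and spread $R$ uniformly within each jump otherwise using an independent auxiliary uniform). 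Since $\{a<U<b\}$ has probability $b-a$ and $U$ is uniform, conditioning on this event leaves $U\sim U(a,b)$, while the conditional cdf $H$ of $R$ is squeezed into the band $\underline{H}\le H\le\overline{H}$ with $\overline{H}(r)=\min\{1,\,r/(b-a)\}$ and $\underline{H}(r)=\max\{0,\,(r-1+(b-a))/(b-a)\}$, because the $(b-a)$- and $(1-(b-a))$-weighted mixture of the subgroup and complement conditionals of $R$ must reproduce the uniform cdf. Conversely, any coupling of $U\sim U(a,b)$ with $R\sim H$ for some $H$ in this band is realizable by an admissible joint law of $(Y_0,Y_1)$.

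The key is then a supermodularity structure. For nondecreasing concave $\phi$ and the nondecreasing quantile functions $Q_0,Q_1$, the map $(u,r)\mapsto\phi(Q_1(r)-Q_0(u))$ is supermodular on $[0,1]^2$: for $u_1\le u_2$ and $r_1\le r_2$ this reduces to $\phi(x_2+h)-\phi(x_2)\ge\phi(x_1+h)-\phi(x_1)$ with $h=Q_1(r_2)-Q_1(r_1)\ge0$ and $x_2=Q_1(r_1)-Q_0(u_2)\le Q_1(r_1)-Q_0(u_1)=x_1$, which holds since $x\mapsto\phi(x+h)-\phi(x)$ is nonincreasing. By the classical rearrangement inequality for supermodular integrands (Lorentz; Cambanis--Simons--Stout; Tchen), among all couplings of $U\mid\{a<U<b\}\sim U(a,b)$ and $R\mid\{a<U<b\}\sim H$, the quantity $\mathbb{E}[\phi(Q_1(R)-Q_0(U))\mid a<U<b]$ is smallest at the countermonotone coupling and largest at the comonotone one. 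Writing those extremal couplings through generalized inverses and then optimizing over $H$ in the band --- which, since $\phi(Q_1(\cdot)-\mathrm{const})$ is nondecreasing and $H\mapsto H^{-1}$ reverses the pointwise order, is attained at $H=\overline{H}$ for the minimum and $H=\underline{H}$ for the maximum, with $\overline{H}^{-1}(s)=(b-a)s$ and $\underline{H}^{-1}(s)=1-(b-a)(1-s)$ --- the substitutions $u=b-(b-a)s$ and $u=a+(b-a)s$ turn these two extrema into $\frac1{b-a}\int_a^b\phi(Q_1(b-u)-Q_0(u))\,du$ and $\frac1{b-a}\int_a^b\phi(Q_1(1-b+u)-Q_0(u))\,du$, i.e., exactly $\mathbb{E}[\phi(Q_1(b-U)-Q_0(U))\mid a<U<b]$ and $\mathbb{E}[\phi(Q_1(1-b+U)-Q_0(U))\mid a<U<b]$. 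Chaining the inequalities (the true conditional $\mathbb{E}\phi$ is at least the countermonotone value at the true $H$, which is at least the value at $\overline{H}$; dually from above) and letting $\phi$ range over all bounded nondecreasing concave functions gives the asserted order; setting $a=0$ and $b=1$ recovers \citet[Lemma 2.2]{fp2010}.

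The step I expect to be the main obstacle is the first one: the distributional-transform representation of $Y_1$ and the precise derivation of the feasibility band for $H$ when $F_1$ has atoms --- which is exactly where the left-limit conventions $F_1(\cdot-)$ enter --- along with checking that $\overline{H}$ and $\underline{H}$ are themselves valid cdfs on $[0,1]$ and that their extremal couplings with $Q_0(U)$ arise from admissible joint laws, so that the two bounds are not merely valid but attained. A second, more routine, point is legitimizing the rearrangement inequality and the interchange of ``minimize over couplings'' with ``minimize over $H$'' under only integrability of $Y_0$ and $Y_1$: one works first with bounded $\phi$ and truncated outcomes, where all quantities are finite and the classical inequalities apply verbatim, and then removes the truncation by monotone and dominated convergence.
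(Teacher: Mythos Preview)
Your proposal is correct and takes a genuinely different route from the paper. The paper argues via Kantorovich duality: it first applies \citet[Theorem~3.1.2]{rr1998} to identify the extremal coupling on the restricted rectangle $(a,b)\times(0,b-a)$ (respectively $(a,b)\times(1-b+a,1)$), then invokes the dual potentials $\phi,\psi$ from \citet[Theorem~3]{bgms2009}, verifies that $\phi$ is monotone, and uses that monotonicity to extend the dual inequality from the rectangle to all of $(0,1)^2$, so that the full uniform marginals of $(U,R)$ can be plugged in. Your argument stays primal throughout: you parametrize the unknown by the conditional cdf $H$ of the rank $R$ of $Y_1$, confine $H$ to the Fr\'echet band $[\underline{H},\overline{H}]$, apply the Cambanis--Simons--Stout/Tchen rearrangement inequality for the supermodular integrand $(u,r)\mapsto\phi(Q_1(r)-Q_0(u))$ (the same supermodularity that drives the paper's transport step), and then optimize the countermonotone/comonotone values over $H$. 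The paper's duality buys a clean extension device; your two-stage ``optimize over couplings, then over $H$'' buys transparency and avoids dual potentials.

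One small overreach: the clause ``conversely, any coupling of $U\sim U(a,b)$ with $R\sim H$ for some $H$ in this band is realizable'' is stronger than you need and not literally true, since an $H$ in the band can increase with slope exceeding $1/(b-a)$, making the complementary conditional $(r-(b-a)H(r))/(1-(b-a))$ non-monotone. But your chained inequality ``true $\ge$ countermonotone at the true $H\ge$ countermonotone at $\overline{H}$'' uses only that the \emph{true} $H$ lies in the band, so the bounds go through regardless; and for attainment only $\overline{H}$ and $\underline{H}$ themselves need to be realizable, which they are (their complements are $U(b-a,1)$ and $U(0,1-(b-a))$). The theorem as stated does not even assert attainment, so this is a cosmetic point.
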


\begin{proof}%
Let $h:\mathbb{R}\to\mathbb{R}$ be a nondecreasing concave function.
Consider the optimal transport from $y\in(a,b)$ to $x\in(0,b-a)$ with cost $h(Q_1(x)-Q_0(y))$.
Then, \citet[Theorem 3.1.2]{rr1998} imply
\[
	\mathbb{E}[h(Q_1(b-V_0)-Q_0(V_0))]\leq\mathbb{E}[h(Q_1(V_1)-Q_0(V_0))]
\]
where $V_0\sim U(a,b)$ and $V_1\sim U(0,b-a)$.
Then, \citet[Theorem 3]{bgms2009} imply that there exist functions $\phi$ and $\psi$ such that for $x\in(0,b-a)$ and $y\in(a,b)$, we have
\(
	\phi(x)+\psi(y)\leq h(Q_1(x)-Q_0(y))
\)
with equality when $x=b-y$.%
\footnote{\citet[Theorem 3]{bgms2009} assume nonnegative costs, but the theorem trivially extends to costs bounded from below. Then optimality follows for cost $h(Q_1(\cdot)-Q_0(\cdot))\vee A$ for every $A\in\mathbb{R}$, so let $A\to-\infty$.}
Here, $\phi$ is nondecreasing since for every $y'\geq y$,
\[
	\phi(b-y)+\psi(y)=h(Q_1(b-y)-Q_0(y))\geq h(Q_1(b-y')-Q_0(y))\geq\phi(b-y')+\psi(y).
\]
Thus, we see that for $x\in(0,1)$ and $y\in(0,1)$,
\[
	\phi(x\wedge(b-a))+\psi(y)\mathbbm{1}\{y\in(a,b)\}-\phi(b-a)\mathbbm{1}\{y\notin(a,b)\}\leq h(Q_1(x)-Q_0(y))\mathbbm{1}\{y\in(a,b)\}
\]
with equality when $x=b-y$ and $y\in(a,b)$ or when $x\geq b-a$ and $y\notin(a,b)$.
Thus, \citet[Theorem 3]{bgms2009} imply
\[
	\mathbb{E}[h(Q_1(b-U)-Q_0(U))\mathbbm{1}\{a<U<b\}]\leq\mathbb{E}[h(Y_1-Y_0)\mathbbm{1}\{a<U<b\}],
\]
that is, $Q_1(b-U)-Q_0(U)\preceq_{2}Y_1-Y_0$ conditional on $a<U<b$.

Similarly, \citet[Theorem 3.1.2]{rr1998} imply
\[
	\mathbb{E}[h(Q_1(V_1)-Q_0(V_0))]\leq\mathbb{E}[h(Q_1(1-b+V_0)-Q_0(V_0))]
\]
where $V_0\sim U(a,b)$ and $V_1\sim U(1-b+a,1)$.
\citet[Theorem 3]{bgms2009} imply that there exist functions $\phi$ and $\psi$ such that for $x\in(1-b+a,1)$ and $y\in(a,b)$, we have
\(
	h(Q_1(x)-Q_0(y))\leq\phi(x)+\psi(y)
\)
with equality when $x=1-b+y$.
Here, $\phi$ is nondecreasing since for every $y'\geq y$,
\begin{align*}
	\phi(1-b+y)+\psi(y)&=h(Q_1(1-b+y)-Q_0(y))\\
	&\leq h(Q_1(1-b+y')-Q_0(y))\leq\phi(1-b+y')+\psi(y).
\end{align*}
Thus, we see that for $x\in(0,1)$ and $y\in(0,1)$,
\begin{multline*}
	h(Q_1(x)-Q_0(y))\mathbbm{1}\{y\in(a,b)\}\\
	\leq\phi(x\vee(1-b+a))+\psi(y)\mathbbm{1}\{y\in(a,b)\}-\phi(1-b+a)\mathbbm{1}\{y\notin(a,b)\}
\end{multline*}
with equality when $x=1-b+y$ and $y\in(a,b)$ or when $x\leq 1-b+a$ and $y\notin(a,b)$.
Thus, \citet[Theorem 3]{bgms2009} imply
\[
	\mathbb{E}[h(Y_1-Y_0)\mathbbm{1}\{a<U<b\}]\leq\mathbb{E}[h(Q_1(1-b+U)-Q_0(U))\mathbbm{1}\{a<U<b\}],
\]
that is, $Y_1-Y_0\preceq_{2}Q_1(1-b+U)-Q_0(U)$ conditional on $a<U<b$.
\end{proof}

\cref{lem:1} turns out to be equivalent to \cref{thm:sosd}.

\begin{proof}[Proof of \cref{lem:1}]
Let $\tilde{f}(x)=-f(-x)$, $\tilde{U}=1-U$, and $\tilde{Y}_j=-Y_j$ so the quantile function of $\tilde{Y}_j$ is $\tilde{Q}_j(u)=-Q_j(1-u)$.
Since $\tilde{f}$ is nondecreasing and concave, \cref{thm:sosd} implies
\begin{multline*}\!\!\!\!\!
	\mathbb{E}[\tilde{f}(\tilde{Q}_1(1-a-\tilde{U})-\tilde{Q}_0(\tilde{U}))\mathbbm{1}\{1-b<\tilde{U}<1-a\}]\leq\mathbb{E}[\tilde{f}(\tilde{Y}_1-\tilde{Y}_0)\mathbbm{1}\{1-b<\tilde{U}<1-a\}]\\
	\leq\mathbb{E}[\tilde{f}(\tilde{Q}_1(a+\tilde{U})-\tilde{Q}_0(\tilde{U}))\mathbbm{1}\{1-b<\tilde{U}<1-a\}],
\end{multline*}
which reduces to the first claim.
Since $-g$ is nondecreasing and concave, the second claim follows straightforwardly from \cref{thm:sosd}.
\end{proof}

The next theorem bounds the unconditional probabilities of winners and losers.
\cref{thm:spw} can be obtained by dividing all sides in \cref{thm:winner} by $b-a$.

\begin{thm}[Bounds on subgroup proportions of winners and losers] \label{thm:winner}
For every $0\leq a<b\leq 1$,
\begin{multline*}
	\sup_{u\in(a,b)}[u-a-F_1(Q_0(u))]_+\leq P(Y_1>Y_0,a<U<b)\\
	\shoveright{\leq b-a-\sup_{u\in(a,b)}[b-u-1+F_1(Q_0(u))]_+,}\\
	\shoveleft{\sup_{u\in(a,b)}[b-u-1+F_1(Q_0(u)-)]_+\leq P(Y_1<Y_0,a<U<b)}\\
	\leq b-a-\sup_{u\in(a,b)}[u-a-F_1(Q_0(u)-)]_+.
\end{multline*}
For each of the lower bounds, there exists a joint distribution of $(Y_0,Y_1)$ that attains it; for each of the upper bounds, there exists a joint distribution of $(Y_0,Y_1)$ that makes it arbitrarily tight.
All bounds are nonincreasing and Lipschitz in $a$ and nondecreasing and Lipschitz in $b$.
\end{thm}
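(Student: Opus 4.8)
The plan is to prove the four probability inequalities by an elementary Fréchet--Hoeffding argument, then establish the sharpness claims by exhibiting explicit couplings of the fixed marginals, and finally read off the regularity in $a$ and $b$ from the closed forms.

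For the inequalities, I would fix a threshold $x\in(a,b)$ and use monotonicity of $Q_0$: since $U<x$ forces $Q_0(U)\le Q_0(x)$ and $U>x$ forces $Q_0(U)\ge Q_0(x)$, one obtains inclusions such as $\{a<U<x\}\cap\{Y_1>Q_0(x)\}\subseteq\{a<U<b,\ Y_0<Y_1\}$ and $\{x<U<b\}\cap\{Y_1\le Q_0(x)\}\subseteq\{a<U<b,\ Y_0\ge Y_1\}$, with the analogous pair for losers. Applying $P(A\cap B)\ge P(A)+P(B)-1$ to the first gives $P(a<U<b,Y_0<Y_1)\ge(x-a)-F_1(Q_0(x))$; taking the positive part and the supremum over $x$ yields the winner lower bound, and the second inclusion together with $P(a<U<b,Y_0<Y_1)=(b-a)-P(a<U<b,Y_0\ge Y_1)$ yields the upper bound. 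The loser bounds follow identically; the only subtlety is that the loser event is strict, so the relevant quantities are $P(Y_1<Q_0(x))=F_1(Q_0(x)-)$ and $P(Y_1\ge Q_0(x))=1-F_1(Q_0(x)-)$, which is precisely why left limits appear in the loser bounds but not in the winner bounds.

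For sharpness of the lower bounds, I would write $Y_1=Q_1(V)$ with $V\sim U(0,1)$, so that, because $Q_1(v)\le y\iff v\le F_1(y)$, a subject is a non-winner exactly when $V\le\rho(U)$, where $\rho(u):=F_1(Q_0(u))$ is nondecreasing. Letting $m$ denote the claimed winner lower bound, one has $m\le b-a$ and $\rho(u)\ge(u-a)-m$ for every $u\in(a,b)$; hence setting $V=U-a-m$ on $\{a+m<U<b\}$ keeps $V\le\rho(U)$ there, and this partial coupling extends (arbitrarily) to a coupling of the two uniforms. Under it every winner lies in $(a,a+m]$, so $P(a<U<b,Y_0<Y_1)\le m$, and combined with the inequality already proved this is an equality; thus each lower bound is attained (the loser case follows by the same construction applied to $(-Y_1,-Y_0)$ and $1-U$, which converts right-continuous $F_1$ into a left-continuous object and explains the left limits). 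For the upper bounds I would run the mirror-image construction, placing the $V$ assigned to the relevant subinterval of $(a,b)$ at the top of $[0,1]$: this attains the bound whenever those values fit inside $[0,1]$, and otherwise (e.g.\ when $\rho$ has a unit-slope stretch near the top, as when $b=1$) a vanishing perturbation shows the bound is approached but not attained, which is why only arbitrary tightness is claimed.

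Finally, monotonicity and the Lipschitz property are read off the formulas: $t\mapsto(t)_+$ is $1$-Lipschitz, each supremand is pointwise (in $x$) nonincreasing in $a$ and nondecreasing in $b$, and replacing $a$ by $a'$ (or $b$ by $b'$) changes the feasible set of $x$ only by an interval over which the supremand is itself bounded by the length of that interval, so the supremum moves by at most $|a-a'|$ (resp.\ $|b-b'|$); the bounds, being such a supremum or $(b-a)$ minus one, inherit these properties. I expect the main work to be in the sharpness constructions---verifying that the explicit couplings genuinely have the prescribed marginals and carefully tracking the one-sided-continuity bookkeeping that separates the ``attained'' claim for the lower bounds from the ``arbitrarily tight'' claim for the upper bounds---while the rest is routine.
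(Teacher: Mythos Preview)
Your proposal is correct and follows essentially the same route as the paper: the Fr\'echet--Hoeffding inclusion argument for the inequalities, the explicit coupling $V=U-a-m$ on $(a+m,b)$ for attainment of the lower bound (identical to the paper's $U_1=U-a-\Delta$), the mirror construction with an $\varepsilon$-shift for arbitrary tightness of the upper bound, and direct manipulation of the closed forms for the Lipschitz and monotonicity claims. Your symmetry reduction $(-Y_1,-Y_0,1-U)$ for the loser bounds and your explanation of why only ``arbitrarily tight'' is claimed at the upper bound are slightly more explicit than the paper, but the underlying argument is the same.
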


\begin{proof}[Proof of \cref{thm:winner}]
For the first lower bound, observe that
\begin{multline*}
	P(a<U<b,Q_0(U)<Y_1)
	\geq\sup_{u\in(a,b)}P(a<U<u,Q_0(u)<Y_1)\\
	\geq\sup_{u\in(a,b)}[P(a<U<u)+P(Q_0(u)<Y_1)-1]_+
	=\sup_{u\in(a,b)}[u-a-F_1(Q_0(u))]_+,
\end{multline*}
where the second inequality uses the lower Fr\'echet inequality.%
\footnote{We thank Ismael Mourifi\'e for suggesting the use of the Fr\'echet inequality in this proof.}
To show tightness, let $\Delta=\sup_{u\in(a,b)}[u-F_1(Q_0(u))]_+$ and consider $Y_1=Q_1(U_1)$ such that $U_1=U-a-\Delta$ if $a+\Delta<U<b$.
The joint distribution of $(U_1,U)$ when $U\notin(a+\Delta,b)$ is arbitrary.
Then $U=u\in(a+\Delta,b)$ are not winners since
\[
	Q_1(u-a-\Delta)-Q_0(u)\leq Q_1(u-a-\Delta)-Q_1(F_1(Q_0(u)))\leq 0,
\]
where the last inequality follows from the monotonicity of $Q_1$ and
\[
	(u-a-\Delta)-F_1(Q_0(u))=[u-a-F_1(Q_0(u))]-\Delta\leq 0.
\]
This means that the lower bound is binding for this joint distribution.

To derive the first upper bound, observe that
\begin{align*}
	P(a<U<b,Q_0(U)<Y_1)
	&=P(a<U<b)-P(a<U<b,Y_1\leq Q_0(U))\\
	&\leq b-a-\sup_{u\in(a,b)}[b-u-1+F_1(Q_0(u))]_+,
\end{align*}
where the inequality follows as the lower bound.
For tightness, let $\Delta=b-a-\sup_{u\in(a,b)}[b-u-1+F_1(Q_0(u))]_+$ and, for arbitrary $\varepsilon>0$, consider $Y_1=Q_1(U_1)$ such that $U_1=U-a+1-\Delta+\varepsilon$ if $a<U<a+\Delta-\varepsilon$.
The joint distribution of $(U_1,U)$ for $U\notin(a,a+\Delta-\varepsilon)$ is arbitrary.
Then $U=u\in(a,a+\Delta-\varepsilon)$ are winners since
\begin{multline*}
	F_1(Q_1(u-a+1-\Delta+\varepsilon))-F_1(Q_0(u))\geq u-a+1-\Delta+\varepsilon-F_1(Q_0(u))\\
	=b-a-[b-u-1+F_1(Q_0(u))]-\Delta+\varepsilon\geq\varepsilon>0,
\end{multline*}
which implies $Q_1(u-a+1-\Delta+\varepsilon)-Q_0(u)>0$.
Since this holds for arbitrarily small $\varepsilon>0$, the upper bound can be arbitrarily tight.

The lower bound is nondecreasing and Lipschitz in $b$ since for every $\varepsilon>0$,
\begin{multline*}
	\sup_{u\in(a,b)}[u-a-F_1(Q_0(u))]_+\leq\sup_{u\in(a,b+\varepsilon)}[u-a-F_1(Q_0(u))]_+\\
	\leq\sup_{u\in(a,b+\varepsilon)}[u-a-F_1(Q_0(u\wedge b))]_+\leq\sup_{u\in(a,b)}[u-a-F_1(Q_0(u))]_++\varepsilon.
\end{multline*}
The lower bound is nonincreasing in $a$ since for every $\varepsilon>0$,
\[
	\sup_{u\in(a-\varepsilon,b)}[u-a+\varepsilon-F_1(Q_0(u))]_+
	\geq\!\!\!\sup_{u\in(a-\varepsilon,b)}[u-a-F_1(Q_0(u))]_+
	\geq\!\sup_{u\in(a,b)}[u-a-F_1(Q_0(u))]_+
\]
and is Lipschitz in $a$ since for every $\varepsilon>0$,
\begin{multline*}\!\!\!\!\!\!
	\sup_{u\in(a-\varepsilon,b)}[u-a+\varepsilon-F_1(Q_0(u))]_+
	\leq\sup_{u\in(a-\varepsilon,a]}(u-a+\varepsilon)_+\vee\sup_{u\in(a,b)}[u-a+\varepsilon-F_1(Q_0(u))]_+\\
	\leq\varepsilon\vee\biggl(\sup_{u\in(a,b)}[u-a-F_1(Q_0(u))]_++\varepsilon\biggr)
	=\sup_{u\in(a,b)}[u-a-F_1(Q_0(u))]_++\varepsilon.
\end{multline*}
The same properties follow analogously for the upper bound.

The loser bounds can be derived analogously.
\end{proof}

\end{appendices}

\bibliographystyle{econ}
\bibliography{stebib}
\addcontentsline{toc}{section}{\refname}

\end{document}